\crefname{corollary}{corollary}{corollaries}
\Crefname{Corollary}{Corollary}{Corollaries}
\newtheorem{definition}{Definition}
\newtheorem{theorem}{Theorem}
\newtheorem{informaltheorem}{Informal Theorem}
\newtheorem{corollary}[theorem]{Corollary}
\newtheorem{informalcorollary}[informaltheorem]{Informal Corollary}
\newtheorem{lemma}{Lemma}
\newtheorem*{remark*}{Remark}
\newtheorem{fact}{Fact}
\newcommand{\R}{\mathbb{R}}
\newcommand{\N}{\mathbb{N}}
\newcommand{\calN}{\mathcal{N}}
\newcommand{\calY}{\mathcal{Y}}
\newcommand{\Ber}{\mathsf{Ber}}
\newcommand{\from}{\leftarrow}
\newcommand{\KeyGen}{\mathsf{KeyGen}}
\newcommand{\Enc}{\mathsf{Enc}}
\newcommand{\Dec}{\mathsf{Dec}}
\DeclareMathOperator*{\E}{\mathbb{E}}
\DeclarePairedDelimiterX{\inner}[2]{\langle}{\rangle}{#1, #2}
\let\leq\leqslant
\let\geq\geqslant
\let\le\leq
\let\ge\geq
\let\epsilon\varepsilon
\let\eps\epsilon
\let\draw\from
\renewcommand{\secpar}{\lambda}
\newcommand{\zo}{\{0,1\}}
\newcommand{\pred}[1]{\ensuremath{\mathsf{#1}}\xspace}
\newcommand{\SD}[1]{\ensuremath{\pred{SD}\left({#1}\right)}\xspace}
\newcommand{\cA}{\mathcal{A}}
\newcommand{\cB}{\mathcal{B}}
\newcommand{\cD}{\mathcal{D}}
\newcommand{\cE}{\mathcal{E}}
\newcommand{\cH}{\mathcal{H}}
\newcommand{\cN}{\mathcal{N}}
\newcommand{\cO}{\mathcal{O}}
\newcommand{\cP}{\mathcal{P}}
\newcommand{\cR}{\mathcal{R}}
\newcommand{\cS}{\mathcal{S}}
\newcommand{\cU}{\mathcal{U}}
\newcommand{\cX}{\mathcal{X}}
\newcommand{\cY}{\mathcal{Y}}
\newcommand{\KL}[2]{\mathrm{D}_{\mathrm{KL}}{\left({#1}\Vert{#2}\right)}}
\newcommand{\Bad}{\mathsf{Bad}}
\renewcommand{\entropy}[1]{H\left({#1}\right)}
\renewcommand{\paragraph}[1]{\medskip\noindent\textbf{#1}~}
\title{Black-Box Crypto is Useless for Pseudorandom Codes}
\author{
    Sanjam Garg\thanks{UC Berkeley and Exponential Science Foundation. Email: \texttt{sanjamg@berkeley.edu}. Author is supported in part by the AFOSR Award FA9550-24-1-0156 and research grants from the Bakar Fund, Supra Inc., and the Stellar Development Foundation.}
    \and Sam Gunn\thanks{UC Berkeley. Email: \texttt{gunn@berkeley.edu}.}
    \and Mingyuan Wang\thanks{NYU, Shanghai. Email: \texttt{mingyuan.wang@nyu.edu}.}
}
\date{June 2025}
\begin{document}

\maketitle

\begin{abstract}
    A \emph{pseudorandom code} is a keyed error-correction scheme with the property that any polynomial number of encodings appear random to any computationally bounded adversary.
    We show that the pseudorandomness of any code tolerating a constant rate of random errors cannot be based on black-box reductions to almost any generic cryptographic primitive: for instance, anything that can be built from random oracles, generic multilinear groups, and virtual black-box obfuscation.
    Our result is \emph{optimal}, as Ghentiyala and Guruswami (2024) observed that pseudorandom codes tolerating any sub-constant rate of random errors exist using a black-box reduction from one-way functions.
    
    The key technical ingredient in our proof is the hypercontractivity theorem for Boolean functions, which we use to prove our impossibility in the random oracle model.
    It turns out that this easily extends to an impossibility in the presence of ``crypto oracles,'' a notion recently introduced---and shown to be capable of implementing all the primitives mentioned above---by Lin, Mook, and Wichs (EUROCRYPT 2025).
\end{abstract}

\thispagestyle{empty}
\newpage
{\small\tableofcontents}
\thispagestyle{empty}
\newpage

\section{Introduction}
\label{sec:intro}


A \emph{pseudorandom code} (PRC) is a keyed error-correction scheme whose codewords appear pseudorandom to any computationally bounded adversary who doesn't know the key.
In other words, it is a secret-key pseudorandom encryption scheme with the additional requirement that the decoding algorithm functions even if the transmission is corrupted.
Unless otherwise specified, a PRC should be robust to a \emph{constant rate} of random errors.

PRCs were defined by \cite{C:ChrGun24}, where they were shown to be equivalent to robust and undetectable watermarking for large language models.
These ideas have since been used for practical watermarking of AI-generated images and videos \cite{GZS25,YZC+25,HLLL25}.

Unfortunately, despite several constructions having been proposed \cite{C:ChrGun24,EPRINT:GheGur24,GM24}, every known PRC either suffers from quasipolynomial-time distinguishing attacks or relies on ad-hoc hardness assumptions.
It is therefore natural to ask,
\begin{center}
    \em
    is there a fundamental barrier to constructing a pseudorandom code from generic cryptographic primitives (such as one-way functions, public-key encryption, etc.)?
\end{center}

\subsection{Our contribution}
\label{sec:contrib}

We answer this question in the affirmative by showing that the black-box use of virtually all cryptographic primitives cannot yield a PRC resilient to a constant fraction of errors.

\begin{informaltheorem}
    Relative to a random oracle, there do not exist statistically-secure pseudorandom codes tolerating any constant rate of random bit-flip errors.\footnote{
    Random errors are the weakest error model considered in the literature. Since we are proving a lower bound, this only makes our result stronger.}
\end{informaltheorem}
There is a simple construction of pseudorandom codes tolerating any sub-constant error rate in the information-theoretic random oracle model \cite{EPRINT:GheGur24}.
Therefore, our result is essentially optimal in terms of the error rate.

In fact, this theorem actually implies much stronger separation results.
In the secret-key setting, the encoder and decoder effectively share a \emph{secret random oracle} which is not accessible to the pseudorandomness adversary.\footnote{
    Note that the encoder and decoder can prepend all of their queries to the random oracle $\cR$ with the shared secret key $\sk$. Then $\cR(\sk\Vert\cdot)$ is the secret random oracle.}
The ideas of \cite{LMW25} enable us to use this fact to prove our stronger separation.

That work introduced \emph{crypto oracles} to prove a strong black-box impossibility result for doubly efficient private information retrieval.
A crypto oracle (refer to \Cref{subsec:techo-crypto-oracle} or \Cref{sec:cryptooracle}) is a stateless algorithm $\cB^\cR$ with access to a secret random oracle $\cR$.
The work of \cite{LMW25} showed that crypto oracles can implement effectively all primitives in cryptography, including idealized primitives such as virtual black-box obfuscation and generic multilinear groups.

\begin{informalcorollary}
    Relative to any crypto oracle, there do not exist statistically secure pseudorandom codes tolerating any constant rate of random bit-flip errors.
\end{informalcorollary}

It is therefore not possible to build a PRC using black-box reductions to almost any generic cryptographic primitive.

By standard techniques in black-box separation literature~\cite{STOC:ImpRud89,TCC:ReiTreVad04}, our result immediately yields a black-box separation between secret-key PRCs tolerating a constant rate of errors and any primitives implied by crypto oracles.\footnote{
    Technically, our theorem shows that pseudorandom codes resilient to a constant rate of random errors do not exist relative to any crypto oracle $\cB^\cR$ and a $\mathsf{PSPACE}$ oracle. Similar to prior results, this corollary is established by observing that, relative to $\cB^\cR$ and $\mathsf{PSPACE}$, for any primitive $\cP$ implied by $\cB^\cR$, $\cP$ exists but PRCs do not exist.}

\begin{informalcorollary}
    Pseudorandom codes resilient to a constant rate of random errors are black-box separated from all primitives implied by crypto oracles, including random oracles, virtual black-box obfuscation, and generic multilinear groups.
\end{informalcorollary}

\subsection{Concurrent work}
An independent and concurrent work also demonstrates a black-box impossibility result for pseudorandom codes \cite{DMR25}.
Our works use similar arguments involving hypercontractivity to prove the impossibility in the random oracle model.
But whereas we go on to prove our impossibility relative to any crypto oracle, they introduce an interesting new class of oracles they call \emph{local oracles} and prove their impossibility relative to this class.
A local oracle is, very roughly, one for which most outputs are unrelated to each other.
As far as we know, these two oracle classes are incomparable to each other.

\section{Technical overview}
\label{sec:overview}
If the reader is already familiar with the prior work on pseudorandom codes, it should be possible to start at \Cref{subsec:techo-strategy}.

We begin this overview by recalling the definition of a pseudorandom code (PRC) and introducing useful notation in \Cref{subsec:techo-definitions}.
For a more complete discussion of PRC definitions, see, e.g. \cite{C:ChrGun24,GM24,EPRINT:GheGur24,AACDG25}.
We then reproduce an argument of Ghentiyala and Guruswami \cite{EPRINT:GheGur24} showing that there exist pseudorandom codes for any $\smallO{1}$ rate of random errors, assuming just the existence of one-way functions.

Our impossibility proof is outlined in \Cref{subsec:techo-strategy,subsec:techo-remove-one-query,subsec:techo-crypto-oracle}.
In \Cref{subsec:techo-strategy} we explain our proof strategy, which is to start by ruling out constructions in the random oracle model.
We show how to compile out the random oracle from any PRC, yielding a statistically-secure PRC in the standard model---something that is easily seen to be impossible.
We outline our analysis of this compiler in \Cref{subsec:techo-remove-one-query}, which is the key technical component of our proof.
Finally, we explain how this extends to a black-box separation from most of cryptography in \Cref{subsec:techo-crypto-oracle}, using an idea of Lin, Mook, and Wichs \cite{LMW25}.

\subsection{Definitions}
\label{subsec:techo-definitions}
In this work, we are interested in pseudorandom codes with robustness to the binary symmetric channel, which introduces random bit-flips.
Following the convention in Boolean analysis, we denote this channel by $\cN_{\rho}$.
For $\rho \in [0,1)$ and $x \in \{0,1\}^n$, $\cN_{\rho}(x)$ replaces each bit of $x$ with a random bit with probability $1-\rho$.
Note that $\cN_{\rho}$ is the binary symmetric channel $\mathsf{BSC}_{(1-\rho)/2}$.
This is the weakest model of noise considered in the literature for pseudorandom codes.

For the purposes of our overview, a \emph{pseudorandom code} (or PRC) for $\cN_{\rho}$ will be a pair of polynomial-time randomized algorithms $\Enc, \Dec$ that satisfy the following properties:
\begin{itemize}
    \item \textbf{Completeness / robustness}: If $\sk \from \{0,1\}^\secpar$ is random, $c \from \Enc(\sk)$, and $\tilde{c} \from \cN_{\rho}(c)$, then $\Dec(\sk, c) = 1$ with probability $1-\negl$.
    \item \textbf{Soundness}: For any fixed $x \in \{0,1\}^n$, we have $\Pr_{\sk \from \{0,1\}^\secpar}[\Dec(\sk, x) = \bot] \ge 1-\negl$.
    \item \textbf{Pseudorandomness}: For any polynomial-time adversary $\cA$ and any $m = \poly$,
    \[
        \abs{\Pr_{\substack{c_1, \dots, c_m \from \Enc(\sk) \\ \sk \from \{0,1\}^\secpar}}[\cA(c_1, \dots, c_m) = 1] - \Pr_{x_1, \dots, x_m \from \{0,1\}^n}[\cA(x_1, \dots, x_m) = 1]} \le \negl.
    \]
\end{itemize}
This definition is only the \emph{secret-key, zero-bit} case of the more general definition of \cite{C:ChrGun24}.
Since we are proving an impossibility, it only strengthens our result to consider this special case.

\subsection{Low-noise pseudorandom codes from pseudorandom functions}
\label{subsec:techo-low-noise-construction}
Before we get to our impossibility, let us recall how pseudorandom codes for $\cN_{\rho}$ can be built from pseudorandom functions (PRFs) if $\rho = 1 - \smallO{1}$.
This argument is from \cite[Section~3]{EPRINT:GheGur24}.

Let $\ell = (\log \secpar) / (1-\rho)$ and $\prf_\sk:\zo^\ell \to \zo^\ell$ be a PRF.
Our encoder $\Enc(\sk)$ first samples random strings $r_1, \dots, r_{\secpar^2} \from \{0,1\}^\ell$, then outputs
\[
    c = \left(r_1 \Vert \prf_\sk(r_1) \Vert \cdots \Vert r_{\secpar^2} \Vert \prf_\sk(r_{\secpar^2})\right).
\]
The decoder, upon receiving $x = \left(\widetilde{r}_1\Vert\widetilde{y}_1\Vert\cdots\Vert \widetilde{r}_{\secpar^2}\Vert\widetilde{y}_{\secpar^2}\right)$, simply checks if $\widetilde{y}_i = \prf_\sk(\widetilde{r}_i)$ for any $i$.

Pseudorandomness follows from the security of the PRF and the fact that $\ell = \omega(\log \secpar)$ if $\rho = 1-\smallO{1}$.
For soundness, suppose that $x$ is independent of $\sk$:
Then for any $i$, the probability that $\widetilde{y}_i = \prf_\sk(\widetilde{r}_i)$ is at most $2^{-\ell} + \negl$ by security of the PRF.
If $\rho = 1-o(1)$ then this is $\negl$, and by a union bound so is the probability that the decoder outputs $1$.

For robustness to $\cN_{\rho}$, suppose that $c \from \Enc(\sk)$ and $\tilde{c} \from \cN(c)$.
Then $\Dec(\sk, \tilde{c}) = \bot$ only if every block $r_i \Vert y_i$ contains an error.
Each bit is correct with probability $\frac{1+\rho}{2}$, so each block contains an error with probability $1 - \left(\frac{1+\rho}{2}\right)^{2 \ell}$.
The probability that every block contains an error is therefore
\[
    \left(1 - \left(\frac{1+\rho}{2}\right)^{2 \ell}\right)^{\secpar^2} \le \left(1 - \frac{\secpar^{\rho^2/2}}{\secpar^2}\right)^{\secpar^2} = \negl
\]
after simplifying.

This construction works when $\rho = 1-\smallO{1}$, but it fundamentally breaks down if $\rho=1-\bigOmega{1}$.
The issue is that pseudorandomness requires our seeds $r_i$ to each have length at least $\ell = \smallOmega{\log n}$ in order to avoid collisions; whereas any completeness-soundness gap requires $\ell = \bigO{\log n}$, because a constant noise rate will produce an error on a string of length $\ell$ with probability $1 - \exp(-\ell)$.

We show that this threshold is not a result of an insufficiently clever scheme, but a fundamental barrier.
That is, while we have just seen that it is easy to build a PRC for any $\smallO{1}$ error rate in the random oracle model, it is impossible to build a PRC for any $\bigOmega{1}$ error rate with black-box reductions to standard generic assumptions.

\subsection{The proof strategy}
\label{subsec:techo-strategy}
The remainder of our overview is devoted to proving the impossibility.
The bulk of the proof is showing that queries to the random oracle are not useful for constructing a PRC.

In order to show that a random oracle is not useful in a two-party (e.g., encoder and decoder) protocol against a third-party eavesdropper (e.g., pseudorandomness distinguisher), a standard methodology in the black-box separation literature is to argue that the eavesdropper can learn all the \emph{intersection queries} \cite{STOC:ImpRud89,C:BarMah09}.
The intersection queries are those made by both parties engaging in the protocol; if the eavesdropper knows all the intersection queries, then the participating parties cannot establish a shared secret.

However, this is not possible in our setting.
Unlike every prior work we are aware of, we are in the \emph{secret-key} setting, where the encoder and decoder share $\sk$.
The encoder and decoder can simply both query, say, $\sk$, making $\sk$ an intersection query that the eavesdropper has no way of learning.
In fact, the encoder and decoder can prevent the eavesdropper from learning any query at all by prepending $\sk$ to all of their queries.
One can therefore think of our model as a {\em secret random oracle model}, where only the secret key holders have access to the oracle.
Our approach will have to demonstrate that the noise channel makes it impossible for the encoder and decoder to put this secret random oracle to good use.

Consider an arbitrary PRC for $\cN_{\rho}$, say $(\Enc^\cR, \Dec^\cR)$, that uses a random oracle $\cR : \{0,1\}^\secpar \to \{0,1\}$.
Instead of defining an adversary, we will compile this PRC into a new one, say $(\widehat{\Enc}^\cR, \widehat{\Dec}^\cR)$, where the decoder uses \emph{one fewer query} to $\cR$, at the cost of a somewhat longer secret key and a small loss in completeness.\footnote{In \Cref{sec:impossibility} we will remove all queries at once. We take the one-query-at-a-time approach in this overview because we find it conceptually somewhat simpler at this level of technicality.}
By iterating this transformation, we can eliminate all queries from the decoder to the oracle---resulting in a PRC that is easily seen to be impossible: If the decoder makes no queries, then the encoder can simulate the oracle locally, and the oracle can be removed altogether.
But any PRC is in particular a pseudorandom encryption scheme, which cannot have statistical security in the standard model.

So, how does our compiled scheme $(\widehat{\Enc}^\cR, \widehat{\Dec}^\cR)$ work?
Let $f_\sk$ be the function mapping the received string $x$ to the first query $\Dec^\cR(\sk, x)$ makes to $\cR$.
Suppose for simplicity that $f_\sk$ is deterministic.
The basic idea of our compiled scheme is to transfer the oracle values on all queries $q$ such that $\Pr_{x \from \{0,1\}^n}[f_\sk(x) = q] \ge \tau$ to the secret key, for some threshold $\tau = 1/\poly$.
We call such queries ``low entropy'' and otherwise ``high entropy.''
Since there can be at most $1/\tau$ low entropy queries, our new secret key $\widehat{\sk}$ is at most $\bigO{\secpar/\tau}$ bits longer than $\sk$.

The compiled scheme will simulate $(\Enc^\cR, \Dec^\cR)$, responding to oracle queries as follows:
\begin{itemize}
    \item For low entropy queries, both $\widehat{\Enc}$ and $\widehat{\Dec}$ consult $\widehat{\sk}$.
    \item If the first query made by $\Dec$ is a high entropy query (that is, it is not in $\widehat{\sk}$), then $\widehat{\Dec}$ samples a fresh random response.
    \item For all other queries, both $\widehat{\Enc}$ and $\widehat{\Dec}$ respond according to $\cR$.
\end{itemize}
Observe that the first oracle query made by $\Dec$ is removed in $\widehat{\Dec}$, so our new decoder makes one fewer query.
Pseudorandomness and soundness for the compiled scheme are immediate, so we must only consider completeness.
Recall the completeness experiment:
\begin{enumerate}
    \item Sample $\widehat{\sk} \from \{0,1\}^\secpar$ uniformly at random.
    \item Generate a codeword $c \from \widehat{\Enc}^\cR(\widehat{\sk})$.
    \item Add noise to the codeword, $\tilde{c} \from \cN_{\rho}(c)$.
    \item Apply the decoder, $\widehat{\Dec}^\cR(\widehat{\sk}, \tilde{c})$. If the result is $\bot$, the experiment fails; if it is 1 then the experiment succeeds.
\end{enumerate}
The only case where $(\widehat{\Enc}^\cR, \widehat{\Dec}^\cR)$ might fail to perfectly simulate $(\Enc^\cR, \Dec^\cR)$ is when the first query made by $\Dec$, say $q$, is a high entropy query: If it happens that $\widehat{\Enc}$ had also queried $q$ in Step 2, then $\widehat{\Enc}$ and $\widehat{\Dec}$ will use different values for $\cR(q)$---potentially ruining completeness.
We argue in \Cref{subsec:techo-remove-one-query} that the error channel $\cN_{\rho}$ prevents $\widehat{\Enc}$ and $\widehat{\Dec}$ from making the same high entropy query.

\subsection{Bounding the error of the compiler}
\label{subsec:techo-remove-one-query}

Recall the function $f_\sk$, which maps the string $\tilde{c}$ received in Step 4 to the first oracle query $\widehat{\Dec}$ makes.
For this overview we assume $f_\sk$ is deterministic to simplify notation and terminology.
If we let $S$ be the set of high entropy queries\footnote{
    Remember that we define high entropy queries with respect to the decoder (specifically $f_\sk$). If the decoder, run on a random input, is unlikely to make $q$ as its first query, then $q$ is a ``high entropy query'' even if the encoder queries $q$ with probability 1.}
made by $\widehat{\Enc}^\cR(\widehat{\sk})$ in step 2, then our aim is to show that the following probability is $\negl$:
\begin{equation} \label{eq:compiler-1}
    \Pr_{\substack{c \from \widehat{\Enc}^\cR(\widehat{\sk}) \\ \tilde{c} \from \cN_{\rho}(c)}}[f_\sk(\tilde{c}) \in S].
\end{equation}
Again, if $f_\sk(\tilde{c}) \not\in S$ then our compiled scheme perfectly simulates the original one, so this will complete our proof.

The first simplifying observation is that, while $c$ has an unknown distribution for a given $(\widehat{\sk}, \cR)$, the \emph{marginal} distribution on just $(\widehat{\sk}, c)$ is actually uniform.
The reason is simple: If $c \from \widehat{\Enc}^\cR(\widehat{\sk})$ was non-uniform conditioned on $\widehat{\sk}$ alone, then the decoder could (inefficiently) distinguish codewords from random strings without using $\cR$ at all!
But, as we mentioned earlier, statistically ``pseudorandom'' encryption is not possible in the standard model (even with an inefficient decoder).
For a formal proof with quantitative bounds on the distance to uniformity, see \Cref{lem:stats}.

So we have the following equivalence with \Cref{eq:compiler-1}, where the codeword is now sampled \emph{uniformly at random} instead of using the encoder:
\[
    \Pr_{\substack{c \from \widehat{\Enc}^\cR(\widehat{\sk}) \\ \tilde{c} \from \cN_{\rho}(c)}}[f_\sk(\tilde{c}) \in S] = \Pr_{\substack{x \from \{0,1\}^n \\ \tilde{x} \from \cN_\rho(x)}}[f_\sk(\tilde{x}) \in S].
\]
The set $S$ of high entropy queries made by the encoder clearly does not depend on the error introduced by $\cN_\rho$.
So by a union bound, if we let $Q$ be an upper bound on the number of queries made by the decoder (so $\abs{S} \le Q$),
\begin{align*}
    \Pr_{\substack{x \from \{0,1\}^n \\ \tilde{x} \from \cN_\rho(x)}}[f_\sk(\tilde{x}) \in S] &\le Q \cdot \max_{f,g \in \cH_\tau} \Pr_{\substack{x \from \{0,1\}^n \\ \tilde{x} \from \cN_\rho(x)}}[f(\tilde{x}) = g(x)],
\end{align*}
where $\cH_\tau = \{f,g : \abs{f^{-1}(g(x^*))} / 2^n \le \tau \text{ for all } x^* \in \{0,1\}^n\}$, capturing the fact that every $q \in S$ is at most $\tau$-likely to be output by $f$.

Now it suffices to show the following simple and completely self-contained mathematical statement:
For every $\rho = 1-\Omega(1)$ and every pair of functions $f, g \in \cH_\tau$,
\begin{equation} \label{eq:compiler-2}
    \Pr_{\substack{x \from \{0,1\}^n \\ \tilde{x} \from \cN_\rho(x)}}[f(\tilde{x}) = g(x)] = \tau^{\Omega(1)}.
\end{equation}
Since $\tau$ is an arbitrary inverse-polynomial, this will imply that \Cref{eq:compiler-1} is $\negl$.

We will formally prove \Cref{eq:compiler-2} using hypercontractivity later, in \Cref{lem:main}.
For now, let us just try to give some basic intuition, aided by \Cref{fig:partition}.
What we need to show is that if two strings $x$ and $\tilde{x}$ are each random on their own and only $\rho$-correlated together, then very little information can be consistently extracted from them.\footnote{Note that it is essential that $x$ and $\tilde{x}$ are each random on their own, as otherwise error correction would make it possible to consistently extract lots of information!}
To understand why this is true, consider the special case that $f = g$.
We can view $f$ as a partition of $\{0,1\}^n$, with a label on each cell corresponding to the output of $f$ on any string in the cell.
Since $f$ is high entropy (i.e., $(f, f) \in \cH_\tau$), each cell in the partition is small.
The key point is that in the hypercube $\{0,1\}^n$, almost all of the probability mass is on the cell boundaries---therefore $f(x) \ne f(\tilde{x})$ with high probability.
Proving that small subsets of the hypercube have mass concentrated on their boundaries is highly non-trivial, and is encapsulated in the hypercontractivity theorem.


\begin{figure}
    \centering
    \begin{tikzpicture}[scale=6]
        \draw[thick] (0,0) rectangle (1,1);
        
        \draw (0.8,0) -- (0.3,1);
        \draw (0.40,0) -- (1,0.60);
        \draw (0,0.70) -- (1,0.40);
        \draw (0.15,0) -- (0.35,1);
        \draw (0,0.50) -- (1,0.80);
        
        \filldraw (0.24,0.55) circle (0.01);
        \draw[dashed] (0.24,0.55) circle (0.1);

        \node at (0.24,0.55) [below left] {$x$};
        \node at (0.3,0.45) [right] {$\cN_{\rho}(x)$};
    \end{tikzpicture}
    \caption{The space of strings $x \in \{0,1\}^n$ received by the decoder can be partitioned according to the first oracle query $f(x)$ the decoder makes upon receiving $x$. \Cref{eq:compiler-2} says that in high dimensions, if every cell in the partition is small, then a small Hamming ball centered at a random string has its mass distributed across many cells---i.e., no function $g(x)$ can guess where $\tilde{x} \from \cN_\rho(x)$ will land. Thus the noise destroys the decoder's ability to make high-entropy intersection queries.}
    \label{fig:partition}
\end{figure}
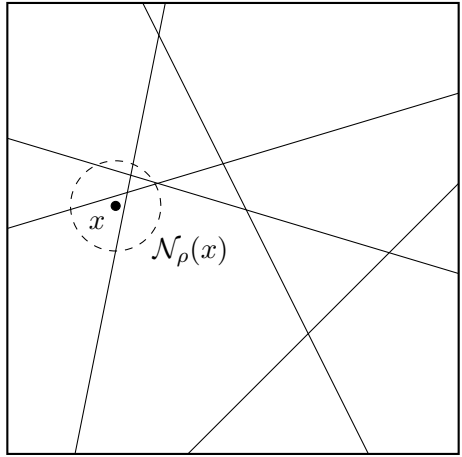

\subsection{Upgrading to a crypto oracle}
\label{subsec:techo-crypto-oracle}

It turns out that our random oracle impossibility easily extends to our full result by using the ideas of \cite{LMW25}.
That work introduces \emph{crypto oracles} and shows that they are capable of implementing most generic cryptography.
A crypto oracle is an efficient, stateless algorithm with access to a secret random oracle.

Because a PRC is a ``secret-key'' primitive---that is, the pseudorandomness adversary cannot access some secret randomness---our encoder and decoder can simply use the (public) random oracle to simulate the crypto oracle.
That is, by prepending all of their queries to the random oracle with the secret key, the encoder and decoder can effectively share a secret random oracle.

Now any construction of a PRC relative to a crypto oracle $\cB^\cR$ immediately implies a PRC in the secret random oracle model, where one simply lets the encoder and decoder simulate $\cB^\cR$ using access to the secret oracle $\cR$.\footnote{
    In a crypto oracle model, a PRC guarantees that an adversary cannot break pseudorandomness given access to $\cB^\cR$. In the compiled scheme, the adversary does not even have access to $\cB^\cR$ and, hence, can only become weaker and achieve a lower advantage.}
Consequently, our theorem immediately implies our separation result in the crypto oracle model.

\section{Toolkit}
\label{sec:toolkit}

\subsection{Notation}
We use $\secpar$ for the security parameter and $\negl$ for a negligible function, i.e., for any polynomial $f(\secpar)$, it holds that $\negl<1/f(\secpar)$ for all large enough $\secpar$.
We use $\N$ to denote the set of positive integers.
In this work ``$\log$'' will denote the base-2 logarithm.
We assume that the reader is familiar with the basic ideas of information theory, such as Shannon entropy, statistical distance, and KL divergence.
For random variables $X$ and $Y$, we write $\entropy{X}$ for the binary entropy of $X$; $\entropy{X \vert Y}$ for the binary entropy of $X$ conditioned on $Y$; $\SD{X,Y}$ for the statistical distance (i.e. total variation distance) between $X$ and $Y$; and $\KL{X}{Y}$ for the binary KL divergence of $X$ from $Y$.

For sets $\cX, \cY$, we will write $f : \cX \to \cY$ and say that $f$ is a ``randomized function'' if, for each $x \in \cX$, $f(x)$ is a random variable on $\cY$.
We use this terminology for convenience of notation; we find that the more standard approach of having $f$ map $\cX$ to distributions on $\cY$ would be cumbersome, and calling $f$ a ``randomized algorithm'' is not appropriate in all of our instances.

\subsection{Information theory}
In this section we prove \Cref{lem:stats}, which we will use twice in our main proof: Once in order to invoke our key technical lemma, \Cref{lem:main}; and once in order to see that statistical security is impossible in the plain model.
First we relate the Shannon entropy to the statistical distance from the uniform distribution.

\begin{fact} \label{fact:pinsker-app}
    Let $X$ be any random variable taking values in a finite set $\cS$, and let $R$ be a uniformly random sample from $\cS$.
    If $\varepsilon = \SD{X,R} \le 1/4$, then
    \[
        2\varepsilon^2 \le \log\abs{S} - \entropy{X} \le 2 \varepsilon \log\abs{S} + 2 \sqrt{\varepsilon}.
    \]
\end{fact}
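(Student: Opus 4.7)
The plan is to express the entropy deficit $\log \abs{S} - \entropy{X}$ as the KL divergence $\KL{X}{R}$, and then sandwich that divergence in terms of $\varepsilon = \SD{X,R}$ using Pinsker's inequality on one side and a Fannes-type continuity-of-entropy bound on the other.

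First I would write down the standard identity: letting $p_X(s) = \Pr[X = s]$ and $R$ be uniform on $S$, a one-line computation gives
\begin{equation*}
    \KL{X}{R} \;=\; \sum_{s \in S} p_X(s) \log\bigl(p_X(s) \cdot \abs{S}\bigr) \;=\; \log \abs{S} - \entropy{X}.
\end{equation*}
This converts the fact into a two-sided bound on $\KL{X}{R}$ in terms of $\SD{X,R}$.

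For the lower bound, Pinsker's inequality immediately yields $\KL{X}{R} \ge 2 \SD{X,R}^2 = 2\varepsilon^2$. (Since we use $\log$ base $2$, the inequality actually holds with constant $2/\ln 2$, so the stated $2\varepsilon^2$ is safe.) For the upper bound, the plan is to invoke the Fannes--Audenaert continuity inequality: for any two distributions $P, Q$ on a set of size $d$ with $\SD{P,Q} = \varepsilon \le 1/2$, $\abs{\entropy{P} - \entropy{Q}} \le \varepsilon \log(d-1) + h(\varepsilon)$, where $h$ is the binary entropy function. Applied to $X$ and $R$ this gives
\begin{equation*}
    \log \abs{S} - \entropy{X} \;=\; \entropy{R} - \entropy{X} \;\le\; \varepsilon \log \abs{S} + h(\varepsilon) \;\le\; 2\varepsilon \log\abs{S} + h(\varepsilon),
\end{equation*}
and then I would finish by noting the calculus fact $h(\varepsilon) \le 2\sqrt{\varepsilon}$ for $\varepsilon \le 1/4$ (which in fact holds on all of $[0,1]$ via the stronger pointwise bound $h(\varepsilon) \le 2\sqrt{\varepsilon(1-\varepsilon)}$).

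The main obstacle, if one prefers a self-contained derivation over citing Fannes--Audenaert, is the upper bound. A standard route is to use the maximal coupling that realizes $\Pr[X \ne R] = \varepsilon$: decompose $X$ and $R$ as mixtures $(1-\varepsilon)\tilde P + \varepsilon \hat P_X$ and $(1-\varepsilon)\tilde P + \varepsilon \hat P_R$ sharing a common component of weight $1-\varepsilon$, then combine concavity of entropy with the mixture bound $\entropy{\alpha P + (1-\alpha) Q} \le \alpha \entropy{P} + (1-\alpha) \entropy{Q} + h(\alpha)$ to transfer between $\entropy{X}$ and $\entropy{R}$. The factor-of-two slack in $2\varepsilon \log \abs{S}$ absorbs any looseness in this argument, so no delicate optimization is needed; all remaining work is bookkeeping.
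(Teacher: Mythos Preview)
Your proposal is correct and follows essentially the same route as the paper: both identify $\log|S| - \entropy{X} = \KL{X}{R}$, apply Pinsker for the lower bound, and use a Fannes-type continuity-of-entropy estimate for the upper bound (the paper cites the version in Cover--Thomas, giving $2\varepsilon\log|S| + 2\varepsilon\log(1/2\varepsilon)$, whereas you cite Fannes--Audenaert), finishing with the same $2\sqrt{\varepsilon}$ bound on the binary-entropy term.
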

\begin{proof}
    For the inequality on the left, first observe that $\KL{X}{R} = \log\abs{S} - \entropy{X}$.
    Then by Pinsker's inequality, $2 \varepsilon^2 \le \KL{X}{R} \cdot \ln 2 \le \KL{X}{R}$.
    
    For the inequality on the right, we have $\log\abs{S} - H(X) \le 2 \varepsilon \log\abs{S} + 2 \varepsilon \log(1/2\varepsilon)$ from \cite[Theorem 16.3.2]{cover-thomas}.
    We simplify the second term as $2 \varepsilon \log(1/2\varepsilon) \le 2 \varepsilon \log(1/\varepsilon) \le 2 \sqrt{\varepsilon}$ for $\varepsilon \le 1/4$.
\end{proof}

\noindent
We now deduce \Cref{lem:stats}.
Roughly, \Cref{lem:stats} states that a finite-length key can only be used to statistically hide a finite number of messages.

\begin{lemma}\label{lem:stats}
    Let $\{\cD_{\key}\}_{\key \in \{0,1\}^\ell}$ be a family of distributions over $\{0,1\}^n$.
    Let $\sk$ be a uniformly random sample from $\{0,1\}^\ell$ and let $x, x_1, \dots, x_m$ be independent samples from $\cD_{\sk}$.

    If $(x_1, \dots, x_m)$ are jointly $\varepsilon$-close to uniform in statistical distance, then $(\sk, x)$ is
    \[
        \bigO{\sqrt{\varepsilon n + \sqrt{\varepsilon}/m + \ell/m}}
    \]
    close to uniform in statistical distance.
\end{lemma}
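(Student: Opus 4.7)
The plan is to pass through Shannon entropy, connecting the hypothesis on $(x_1, \ldots, x_m)$ and the desired conclusion on $(\sk, x)$ through a short entropy calculation that exploits the conditional i.i.d.\ structure of the samples given $\sk$. I may assume $\varepsilon \leq 1/4$ throughout, else the claimed bound is vacuous.

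First, I would apply the upper-bound direction of \Cref{fact:pinsker-app} to the $\varepsilon$-near-uniform tuple $(x_1, \ldots, x_m)$, which lives on $\{0,1\}^{nm}$, to obtain the entropy deficit
\[
    nm - H(x_1, \ldots, x_m) \leq 2 \varepsilon n m + 2 \sqrt{\varepsilon}.
\]
Next, I would expand $H(\sk, x_1, \ldots, x_m)$ in two ways via the chain rule:
\[
    H(x_1, \ldots, x_m) + H(\sk \mid x_1, \ldots, x_m) = H(\sk) + \sum_{i=1}^{m} H(x_i \mid \sk) = \ell + m \cdot H(x \mid \sk),
\]
where the second equality uses that $\sk$ is uniform and the $x_i$ are i.i.d.\ given $\sk$. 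Dropping the non-negative term $H(\sk \mid x_1, \ldots, x_m)$ and combining with the previous display gives $n - H(x \mid \sk) \leq 2 \varepsilon n + 2 \sqrt{\varepsilon}/m + \ell / m$. Since $H(\sk, x) = \ell + H(x \mid \sk)$, the joint variable $(\sk, x)$ falls short of uniform on $\{0,1\}^{n + \ell}$ in entropy by at most the same quantity.

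Finally, I would invoke the left inequality (the Pinsker direction) of \Cref{fact:pinsker-app} to convert this entropy deficit back into a bound on the statistical distance of $(\sk, x)$ from uniform, yielding the claimed $\bigO{\sqrt{\varepsilon n + \sqrt{\varepsilon}/m + \ell/m}}$. I do not foresee any real obstacle---the argument is a routine information-theoretic calculation. The one delicate point is the direction in which the chain rule is used: dropping $H(\sk \mid x_1, \ldots, x_m)$ rather than $H(x_1, \ldots, x_m)$ is what produces a \emph{lower} bound on $H(x \mid \sk)$ (and hence on $H(\sk, x)$), which is what Pinsker consumes. Morally, the $\ell / m$ term reflects the average per-sample information about $\sk$ leaked by the tuple, and this is exactly what obstructs a finite-length key from statistically hiding unboundedly many samples.
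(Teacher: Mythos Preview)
Your proposal is correct and follows essentially the same route as the paper: bound the entropy deficit of $(x_1,\ldots,x_m)$ via the right inequality of \Cref{fact:pinsker-app}, use the chain rule together with the conditional i.i.d.\ structure to convert this into a lower bound on $H(x\mid\sk)$ (hence on $H(\sk,x)$), and finish with the Pinsker direction of \Cref{fact:pinsker-app}. If anything, you are slightly more careful than the paper, which writes $H(x_1,\ldots,x_m)=H(\sk)+H(x_1,\ldots,x_m\mid\sk)$ as an equality; your explicit step of dropping the non-negative term $H(\sk\mid x_1,\ldots,x_m)$ is exactly what makes this an inequality in the needed direction.
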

\begin{proof}
    Let $\sk$ be a random string from $\{0,1\}^\ell$, and let $x, x_1, \dots, x_m$ be samples from $\cD_{\sk}$.
    We have
    \begin{align*}
        H(x_1, \dots, x_m) &\le H(x_1, \dots, x_m, \sk) \\
        &= H(\sk) + H(x_1, \dots, x_m | \sk) \\
        &= \ell + m \cdot H(x | \sk),
    \end{align*}
    which means that $H(x | \sk) \ge (H(x_1, \dots, x_m) - \ell) / m$.
    
    Let $R$ be a uniformly random sample from $\{0,1\}^{\ell + n}$.
    Fact~\ref{fact:pinsker-app} implies the following two inequalities:
    \begin{itemize}
        \item[$\bullet$] $\SD{(\sk, x), R} \le \sqrt{n + \ell - H(\sk, x)} = \sqrt{n - H(x | \sk)}$, and
        \item[$\bullet$] $H(x_1, \dots, x_m) \ge nm - 2 \varepsilon nm - 2 \sqrt{\varepsilon}$.
    \end{itemize}
    Therefore
    \begin{align*}
        \SD{(\sk, x), R} &\le \sqrt{n - H(x | \sk)} \\
        &\le \sqrt{n - (H(x_1, \dots, x_m) - \ell) / m} \\
        &\le \sqrt{n - (nm - 2 \varepsilon nm - 2 \sqrt{\varepsilon} - \ell) / m} \\
        &= \sqrt{2 \varepsilon n + 2 \sqrt{\varepsilon}/m + \ell/m},
    \end{align*}
    completing the proof.
\end{proof}

\subsection{Pseudorandom codes}
\label{sec:prc-def}

Following~\cite{C:ChrGun24}, we define private-key pseudorandom codes as follows.
Our impossibility result will apply to the private-key setting, which immediately implies the corresponding impossibility for the public-key setting.
Therefore we omit definitions of public-key pseudorandom codes.

\begin{definition}[Private-key PRC]\label{def:prc} A {\em private-key $(\delta, \varepsilon, \mu)$ pseudorandom error-correcting code} over an alphabet $\Sigma$ and a channel $\cE\colon\Sigma^*\rightarrow\Sigma^*$ consists of a tuple of PPT algorithms $(\KeyGen, \Enc, \Dec)$:
\begin{itemize}
    \item A key generation algorithm $\KeyGen$ that samples a secret key $\sk\in\zo^{\ell(\secpar)}$. 
    \item An encoding algorithm $\Enc$ that takes the secret key $\sk\in\zo^{\ell(\secpar)}$ and the message $m\in\Sigma^{k(\secpar)}$ as input and outputs a codeword $c\in\Sigma^{n(\secpar)}$.
    \item A decoding algorithm that takes the secret key $\sk\in\zo^{\ell(\secpar)}$ and a (potentially erroneous) codeword $c\in\Sigma^{n(\secpar)}$ as input and outputs a message $m\in\Sigma^{k(\secpar)}\cup\{\bot\}$.
\end{itemize}
These algorithms must satisfy the following properties:
\begin{itemize}
    
    \item {\bfseries $\delta$-Completeness (robustness).} For any message $m\in\Sigma^k$, it holds that
    $$\prob{\left.\Dec(\sk,\widetilde{c}) = m \ \middle|\  \begin{gathered}
        \sk\from\KeyGen(\cdot)\\
        c\from\Enc(\sk,m)\\
        \widetilde{c}\from \cE(c)
    \end{gathered}\right.} \geq 1-\delta.$$
    
    \item {\bfseries $\eps$-Pseudorandomness.} For any PPT adversary $\cA$, it holds that
    $$
    \abs{\prob{\left.\cA^{\Enc(\sk,\cdot)}(\cdot) = 1 \ \middle|\   \sk \from \KeyGen(\cdot)\right.} 
    - \prob{\cA^{\cU(\cdot)}(\cdot) = 1}} \leq \eps,
    $$
    where $\Enc(\sk,\cdot)$ generates fresh encodings even if the same message is queried twice and $\cU(\cdot)$ is an oracle that simply outputs fresh random strings.

    \item {\bfseries $\mu$-Soundness.} For a fixed codeword $c^*\in\Sigma^n$, it holds that
    $$\prob{\Dec(\sk,c^*) = \bot \ \vert\ \sk\from\KeyGen(\cdot)}\geq 1- \mu .$$
\end{itemize}
We will drop the specification of $\delta, \varepsilon, \mu$ in the case that they are all bounded by $\negl$.
\end{definition}

\paragraph{Zero-bit PRC.} We consider PRCs with only one possible message, $m=1$, which are known as {\em zero-bit} PRCs and are useful for watermarking.
For random errors, one can convert a zero-bit PRC to a many-bit PRC simply by concatenating codewords and random strings \cite{C:ChrGun24}.
Conversely, a zero-bit PRC is immediately implied by any PRC with $k > 0$, so our impossibility result is not weakened at all by restricting to this case.
We therefore drop the message in the remainder of this paper, writing $\Enc(\sk)$ instead of $\Enc(\sk, 1)$.
The decoder $\Dec(\sk, x)$ outputs $1$ or $\bot$, indicating success or failure, respectively.

\paragraph{Binary Alphabet.} In this work we will restrict ourselves to a binary alphabet, i.e. $\Sigma = \zo$.
Note that a pseudorandom code over large alphabet implies a pseudorandom code for binary alphabet.
Since we are proving an impossibility result, this restriction only makes our lower bound stronger.

\paragraph{Noise channel.} In this work we will only consider the binary symmetric channel, which introduces i.i.d bit-flip errors on the codeword bits.
Adversarial noise channels such as those considered in \cite{AACDG25} can easily simulate the binary symmetric channel, so our impossibility result applies to pseudorandom codes for those channels as well.
We will denote the binary symmetric channel by $\cN_{\rho}$, defined as
\[
    \cN_{\rho}(x) = x \oplus e
\]
where $e \from \Ber((1-\rho)/2)^n$.
We use $\Ber(p)$ to denote the Bernoulli distribution with probability $p$, and $\oplus$ to denote the bitwise XOR.
In other words, each symbol of $x$ is replaced by a random element with probability $1-\rho$ in $\cN_{\rho}(x)$.

\subsection{Boolean analysis and hypercontractivity}

A core part of our argument makes use of Boolean analysis.
We introduce minimal notation here to facilitate our proof; we refer the reader to \cite{ryan-book} for more details.
For any Boolean function $f : \{0,1\}^n \to \R$, its $\ell$-norm $\Vert f\Vert_\ell$ is defined as
$$ \Vert f\Vert_\ell = \left(\E_x[f(x)^\ell]\right)^{1/\ell}.$$
Unless otherwise specified, expectations are assumed to be taken over the uniform distribution.
For any two Boolean functions $f$ and $g$, their {\em inner product} is defined as
$$\inner{f}{g} = \E_x[f(x)\cdot g(x)].$$
By Cauchy-Schwartz, we have
$$\inner f g\leq \Vert f\Vert_2\cdot \Vert g\Vert_2.$$

We now define the {\em noise operator $T_\rho$}.
For any Boolean function $f$, the noise operator $T_\rho$ on the function $f$ defines a new function $ T_\rho f $ as
$$(T_\rho f)(x) = \E_{y\draw\cN_\rho (x)}[f(y)].$$
The following is a special case of the {\em hypercontractivity theorem} \cite{FOCS:KahKalLin88}.

\begin{theorem}[Hypercontractivity, \cite{FOCS:KahKalLin88}] \label{theorem:hypercontractivity}
    For any function $f : \{0,1\}^n \to \R$ and $\rho \in [0,1]$,
    \[
        \norm{T_{\rho} f}_2 \le \norm{f}_{1+\rho^2}.
    \]
\end{theorem}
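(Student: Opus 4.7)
The plan is to use the classical two-step strategy: first establish the two-point inequality (the $n=1$ case), then bootstrap to arbitrary $n$ via tensorization. Concretely, I would prove that for every $f\colon\{0,1\}\to\R$, writing $a=f(0)$ and $b=f(1)$, the inequality
\[
    \tfrac{1}{2}\!\left(\tfrac{1+\rho}{2}a+\tfrac{1-\rho}{2}b\right)^{\!2}+\tfrac{1}{2}\!\left(\tfrac{1-\rho}{2}a+\tfrac{1+\rho}{2}b\right)^{\!2} \le \left(\tfrac{|a|^{1+\rho^2}+|b|^{1+\rho^2}}{2}\right)^{\!2/(1+\rho^2)}
\]
holds for all $\rho\in[0,1]$. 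After noting the two-point inequality is invariant under $(a,b)\mapsto(\lambda a,\lambda b)$, one may assume $b=1$ and reduce the claim to a single-variable inequality in $a$, which can be verified by computing Taylor series around $a=1$ or by a direct calculus/derivative argument showing that the ratio of the two sides is maximized at the boundary. This is the step I expect to be the main technical obstacle; although short, it is where the precise exponent $1+\rho^2$ is forced, and getting the right sign in the Taylor coefficients requires some bookkeeping.

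Second, I would perform induction on $n$ to extend the two-point inequality to $f\colon\{0,1\}^n\to\R$. The key observation is that the noise operator tensorizes: decomposing $\{0,1\}^n=\{0,1\}\times\{0,1\}^{n-1}$ and writing $f_b(y)=f(b,y)$, one has
\[
    (T_\rho f)(x_1,y) = (T_\rho^{(1)}\otimes T_\rho^{(2,\dots,n)} f)(x_1,y),
\]
where $T_\rho^{(1)}$ acts on the first coordinate and $T_\rho^{(2,\dots,n)}$ on the remaining ones. Computing $\|T_\rho f\|_2^2$ by taking expectation first over $y$ and then over $x_1$, I would use the inductive hypothesis on $\{0,1\}^{n-1}$ to bound the inner $2$-norm by a $(1+\rho^2)$-norm, and then apply the two-point inequality to the resulting pair of functions of $x_1$. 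The transition between the two steps uses Minkowski's inequality (or equivalently, the fact that the $L^p$-norm of an $L^q$-norm exceeds the $L^q$-norm of an $L^p$-norm when $q\ge p$), applied to interchange the order of norms so that the outer norm becomes $\|\cdot\|_{1+\rho^2}$ and matches the desired right-hand side.

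The whole argument therefore rests on two ingredients: a scalar inequality that has to be verified by hand in the two-point case, and the tensorization trick which is a clean induction provided one is careful about the order in which norms are taken. I expect the bulk of the effort to lie in the two-point inequality, since the tensorization step, while requiring the correct application of Minkowski, is a standard and mechanical once the base case is nailed down.
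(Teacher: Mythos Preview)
The paper does not prove this theorem; it is stated with a citation to \cite{FOCS:KahKalLin88} and used as a black box in the proof of \Cref{lem:main}. Your proposal is the standard Bonami--Beckner argument (two-point inequality plus tensorization via Minkowski), which is correct and is essentially the proof one finds in the references, so there is nothing to compare against here.
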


Hypercontractivity is the core ingredient in our main technical lemma, \Cref{lem:main}.

\begin{lemma}\label{lem:main}
    Let $\alpha > 0$. Suppose $f, g : \{0,1\}^n \to \calY$ are (randomized) functions such that, for all $x^* \in \{0,1\}^n$, $$\prob{f(x) = g(x^*)\ \vert\ x\draw\zo^n}\leq \alpha.$$ Then, for any $\rho \in [0,1]$, it holds that
    \[
        \prob{\left. f(\widetilde{x}) = g(x) \ \middle\vert\ \begin{gathered}
            x \draw \{0,1\}^n \\ \widetilde{x} \draw\calN_{\rho}(x)
        \end{gathered}\right.} \leq \alpha^{\frac{1}{2} \cdot \left(\frac{1-\rho^2}{1+\rho^2}\right)}.
    \]
\end{lemma}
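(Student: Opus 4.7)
\medskip\noindent\textbf{Proof plan.} The strategy is to decompose the event $\{f(\widetilde{x}) = g(x)\}$ over outputs $y \in \calY$, apply hypercontractivity fiber by fiber, and combine the fibers through a H\"older--AM-GM balancing argument. Define the fiber functions $F_y, G_y : \zo^n \to [0,1]$ by $F_y(x) = \prob{f(x) = y}$ and $G_y(x) = \prob{g(x) = y}$; pointwise $\sum_y F_y(x) = \sum_y G_y(x) = 1$, so in particular $\sum_y \Vert F_y\Vert_1 = \sum_y \Vert G_y\Vert_1 = 1$. Using that the randomness in $f$, $g$, and $\calN_\rho$ is independent,
\[
    \prob{f(\widetilde{x}) = g(x)} = \sum_y \expsub{x \draw \zo^n}{G_y(x) \cdot (T_\rho F_y)(x)} = \sum_y \inner{G_y}{T_\rho F_y},
\]
and averaging the hypothesis $\sum_y G_y(x^*) \cdot \Vert F_y\Vert_1 \leq \alpha$ over a uniformly random $x^*$ yields $\sum_y \Vert G_y\Vert_1 \cdot \Vert F_y\Vert_1 \leq \alpha$.

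Applying Cauchy--Schwarz and then \Cref{theorem:hypercontractivity} to each term gives
\[
    \inner{G_y}{T_\rho F_y} \leq \Vert G_y\Vert_2 \cdot \Vert T_\rho F_y\Vert_2 \leq \Vert G_y\Vert_2 \cdot \Vert F_y\Vert_{1+\rho^2}.
\]
Since $F_y, G_y$ are pointwise in $[0,1]$, we have $\Vert G_y\Vert_2 \leq \Vert G_y\Vert_1^{1/2}$ and $\Vert F_y\Vert_{1+\rho^2} \leq \Vert F_y\Vert_1^{1/(1+\rho^2)}$. Setting $u_y := \Vert G_y\Vert_1$ and $v_y := \Vert F_y\Vert_1$ (non-negative with $\sum_y u_y = \sum_y v_y = 1$), it therefore suffices to establish the scalar inequality
\[
    \sum_y u_y^{1/2} v_y^{1/(1+\rho^2)} \leq \Bigl(\sum_y u_y v_y\Bigr)^{(1-\rho^2)/(2(1+\rho^2))},
\]
which, combined with the averaged hypothesis, produces the claimed bound.

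To prove this scalar inequality, set $s = 1/2$ and $t = 1/(1+\rho^2)$, so that $s+t-1 = (1-\rho^2)/(2(1+\rho^2))$ and $s+t \in (1, 3/2]$ for $\rho \in [0,1)$. Factor
\[
    u_y^s v_y^t = (u_y v_y)^{s+t-1} \cdot u_y^{1-t} v_y^{1-s},
\]
and apply H\"older with conjugate exponents $p = 1/(s+t-1)$ and $q = 1/(2-s-t)$ (one checks $1/p + 1/q = 1$). The first factor collapses to exactly $\bigl(\sum_y u_y v_y\bigr)^{s+t-1}$. In the second factor, $\bigl(\sum_y u_y^{(1-t)q} v_y^{(1-s)q}\bigr)^{1/q}$, the two inner exponents satisfy $(1-t)q + (1-s)q = 1$, so weighted AM--GM bounds each summand above by $\tfrac{1-t}{2-s-t}\, u_y + \tfrac{1-s}{2-s-t}\, v_y$; summing and using $\sum_y u_y, \sum_y v_y \leq 1$ forces this factor to be $\leq 1$. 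The main obstacle is identifying this precise split: applying hypercontractivity and Cauchy--Schwarz naively only bounds $\sum_y u_y^{1/2} v_y^{1/(1+\rho^2)}$ by $1$, and it is the factorization $u^s v^t = (uv)^{s+t-1} u^{1-t} v^{1-s}$ that exposes the product $\sum_y u_y v_y \leq \alpha$ while routing the residual factor into a self-normalizing form that AM--GM can handle.
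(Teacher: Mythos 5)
Your proof is correct, and its first half coincides with the paper's: you decompose fiber-by-fiber, write the target probability as $\sum_y \inner{G_y}{T_\rho F_y}$, and apply Cauchy--Schwarz plus \Cref{theorem:hypercontractivity} per fiber, then drop to the one-norms using $F_y,G_y \in [0,1]$. You diverge at the step that combines the fibers. The paper applies a second Cauchy--Schwarz over the sum in $y$, obtaining $\sqrt{\sum_y Q(y)}\cdot\sqrt{\sum_y P(y)^{2/(1+\rho^2)}}$, and then bounds the second factor by pulling out one power of $P(y)$ and invoking the \emph{pointwise} inequality $P(y) \le \alpha$ for every $y$ in the support of $g$. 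Your H\"older--then--AM-GM factorization $u^s v^t = (uv)^{s+t-1}\,u^{1-t}v^{1-s}$ instead uses only the \emph{averaged} form $\sum_y \norm{G_y}_1 \norm{F_y}_1 \le \alpha$, which is exactly what the lemma's hypothesis yields after averaging over $x^*$. This is a genuine improvement in generality: when $g$ is randomized, the hypothesis $\sum_y \Pr[g(x^*)=y]\,P(y)\le\alpha$ is a weighted average and does not literally imply $P(y)\le\alpha$ for every $y$ in $g$'s support, so the paper's "(Assumption on $f$)" step implicitly requires $g$ to be deterministic or the hypothesis to hold for every fixed realization of $g$'s coins. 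Your argument closes that gap at the cost of a slightly more intricate scalar inequality (though you should still note the degenerate endpoint $\rho=1$, where the conjugate exponent $p$ diverges but the claimed bound $\alpha^0=1$ is trivial).
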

\begin{proof}
The proof works by a careful application of Cauchy-Schwarz and hypercontractivity.
For $y \in \cY$ and $x \in \{0,1\}^n$, let $p_y(x) = \Pr[f(x) = y]$ and $q_y(x) = \Pr[g(x) = y]$.
Let $\cY^* = \{y \in \cY : \norm{q_y}_1 > 0\}$.
\begin{align*}
    &\prob{\left. f(\widetilde{x}) = g(x) \ \middle\vert\ \begin{gathered}
            x \draw \{0,1\}^n \\ \widetilde{x} \draw\calN_{\rho}(x)
        \end{gathered}\right.} \\
    &=
    \E_{x \from \{0,1\}^n} \sum_{y \in \cY^*} q_y(x) \cdot (T_{\rho} p_y)(x) \\
    &= \sum_{y \in \cY^*} \inner{q_y}{T_{\rho} p_y} \\
    &\le \sum_{y \in \cY^*} \norm{q_y}_2 \cdot \norm{T_{\rho}p_y}_2 & \text{(Cauchy-Schwarz)} \\
    &\le \sum_{y \in \cY^*} \norm{q_y}_2 \cdot \norm{p_y}_{1+\rho^2} & \text{(Hypercontractivity)} \\
    &\le \sqrt{\sum_{y \in \cY^*} \norm{q_y}_2^2} \cdot \sqrt{\sum_{y \in \cY^*} \norm{p_y}_{1+\rho^2}^2} & \text{(Cauchy-Schwarz)} \\
    &= \sqrt{\sum_{y \in \cY^*} \E_x q_y(x)^2} \cdot \sqrt{\sum_{y \in \cY^*} \left(\E_x p_y(x)^{1+\rho^2}\right)^{2/(1+\rho^2)}} \\
    &\le \sqrt{\sum_{y \in \cY^*} \E_x q_y(x)} \cdot \sqrt{\sum_{y \in \cY^*} \left(\E_x p_y(x)\right)^{2/(1+\rho^2)}} \\
    \intertext{Letting $P(y) = \Pr_{x \from \{0,1\}^n}[f(x) = y]$ and $Q(y) = \Pr_{x \from \{0,1\}^n}[g(x) = y]$,}
    &= \sqrt{\sum_{y \in \cY^*} Q(y)} \cdot \sqrt{\sum_{y \in \cY^*} P(y)^{2 / (1+\rho^2)}} \\
    &= 1 \cdot \sqrt{\sum_{y \in \cY^*} P(y) \cdot P(y)^{2 / (1+\rho^2)-1}} \\
    &\le \sqrt{\sum_{y \in \cY^*} P(y) \cdot \alpha^{2 / (1+\rho^2)-1}} & \text{(Assumption on $f$)} \\
    &= \alpha^{\frac{1}{2} \left(\frac{1-\rho^2}{1+\rho^2}\right)}. & \qedhere
\end{align*}
\end{proof}

\subsection{Crypto oracles}
\label{sec:cryptooracle}
The recent work of \cite{LMW25} introduced ``crypto oracles'' for the purposes of an impossibility result for doubly-efficient private information retrieval.

\begin{definition}[Crypto oracle, \cite{LMW25}]
    A \emph{crypto oracle} is a function $\cB^\cR$ where $\cB$ is a stateless, polynomial time, deterministic Turing machine with oracle access to a secret random function $\cR : \{0,1\}^* \to \{0,1\}$.
\end{definition}

The same work proved the following, which we have heavily paraphrased for simplicity.
See their paper for details.

\begin{theorem}[\cite{LMW25}]
    There exist crypto oracles that implement:
    \begin{itemize}
        \item VBB obfuscation for Turing machines, and
        \item the generic multilinear group.
    \end{itemize}
\end{theorem}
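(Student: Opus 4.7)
The plan is to construct, for each primitive in the list, an explicit crypto oracle $\cB^\cR$ that interprets $\cR$ as secret key material for an authenticated encryption scheme, and then prove security by a simulation argument against adversaries whose only access to $\cR$ is mediated through $\cB$. The statelessness of $\cB$ forces all ``memory'' across calls to live in $\cR$, and every handle produced by $\cB$ must be self-contained, i.e., an authenticated ciphertext that $\cB$ can later re-verify and decrypt using only $\cR$.

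For VBB obfuscation, I would have $\cB^\cR$ support two modes. In \emph{obfuscate} mode, on input $(M, r)$, it returns an authenticated encryption $c = \mathsf{AuthEnc}_\cR(M; r)$ of the Turing machine $M$ under key $\cR$ with caller-supplied nonce $r$. In \emph{evaluate} mode, on input $(c, x)$, it verifies the tag on $c$, decrypts to recover $M$, and returns $M(x)$ after at most polynomially many emulated steps; crucially, $M$ itself is never exposed. The resulting obfuscator for the primitive is a trivial wrapper around $\cB^\cR$ in obfuscate mode with fresh randomness. For the generic multilinear group, handles are authenticated encodings of pairs $(g, i)$ specifying a value and a level, and $\cB^\cR$ implements each group operation (same-level addition, multilinear pairing, and top-level zero-testing) by first verifying the input handles, operating on the underlying values, and returning a fresh authenticated handle for the result.

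The main technical step is proving VBB security. I would build a simulator $\cS^M$ that outputs a uniformly random string $c^*$ in place of the obfuscation and then simulates $\cB^\cR$ for the adversary by lazily sampling $\cR$'s answers: queries of the form $(\text{``eval''}, c^*, x)$ are answered by forwarding $x$ to the $M$-oracle, and all other queries are handled by faithfully running $\cB$ on the lazily-sampled $\cR$. Indistinguishability from the real execution then reduces to two claims: that a genuine authenticated ciphertext is indistinguishable from a fresh random string, and that the adversary cannot produce any $c' \ne c^*$ that verifies except by first obtaining it from an obfuscate query. Both reduce to the unpredictability of $\cR$ against an adversary making a polynomial number of oracle calls. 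The generic multilinear group claim follows from an analogous hybrid against the ideal multilinear group oracle, using handle-unforgeability to argue that the adversary's view is identically distributed in both worlds up to a negligible ``bad event.''

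The main obstacle I anticipate is the hybrid argument needed to switch $\cR$-responses from real to simulated one query at a time, because $\cB^\cR$ itself queries $\cR$ during every tag verification and decryption step. One must carefully bound the probability that the adversary's adaptive queries to $\cB^\cR$ cause $\cB$ to query the same point of $\cR$ that the simulator has ``committed'' to on the adversary's behalf, before the corresponding handle has actually been revealed---a collision that could leak information about the simulated $\cR$. Controlling this bad event over a polynomial number of adaptive queries, and propagating the bound through the hybrids, is where the bulk of the formal work would lie.
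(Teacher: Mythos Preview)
This theorem is not proved in the paper at all: it is quoted verbatim (in paraphrased form) from \cite{LMW25}, and the paper explicitly defers to that work for the proof, writing ``See their paper for details.'' There is therefore no ``paper's own proof'' to compare your proposal against.

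Your sketch is a reasonable high-level outline of the kind of construction and simulation argument one expects to find in \cite{LMW25}: authenticated-encryption-style handles under a key derived from $\cR$, statelessness enforced by making handles self-verifying ciphertexts, and a lazy-sampling simulator that swaps the real obfuscation for a random string and answers evaluate queries via the black-box oracle. Whether this matches the actual proof in \cite{LMW25} in its details (e.g., how they handle adaptive forgeries, how they get canonical handles for the multilinear group so that equality testing works, and how they manage the polynomial running-time bound for Turing machine evaluation) is something you would need to check against that paper directly, not this one. For the purposes of the present paper, the theorem is used purely as a black box to conclude that ruling out PRCs relative to crypto oracles also rules them out relative to VBB obfuscation and generic multilinear groups; no part of its proof is needed here.
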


\section{The impossibility}
\label{sec:impossibility}

\subsection{Compiling out the random oracle}

In this section, we present a compiler that compiles any PRC in the random oracle model into a PRC in the information-theoretic setting.
The new scheme has the same soundness and pseudorandomness, but it requires a larger secret key and incurs a loss in the completeness error.

We assume $\KeyGen^\cR$ does not make any oracle queries.
This is without loss of generality because we can simply include any $\cR$ queries/responses used by $\KeyGen$ in the secret key.

In the following lemma and proof, we leave the dependence of functions of the security parameter on the security parameter implicit, writing e.g. $\tau$ instead of $\tau(\secpar)$.

\begin{lemma}\label{lem:ro-1}
Let $\ell, Q : \N \to \N$ be functions of the security parameter.
Let $(\KeyGen, \Enc^\cR, \Dec^\cR)$ be any zero-bit $(\delta,\eps,\mu)$-PRC construction in the random oracle model where $\Enc^\cR$ and $\Dec^\cR$ each make at most $Q$ queries to $\cR$ and $\KeyGen(\secparam)$ outputs keys of length at most $\ell$.

Then for any $\tau : \N \to (0,1)$, there exists a $(\delta',\eps',\mu')$-PRC construction $(\KeyGen', \Enc', \Dec')$ in the standard model where
\begin{itemize}
    \item $\delta' = \delta + 2^{-\secpar} \cdot Q / \tau + Q^2\cdot\tau^{\frac{1}{2} \cdot \left(\frac{1-\rho^2}{1+\rho^2} \right)} + \bigO{Q^2\cdot\sqrt{\varepsilon n}}$,
    \item $\eps' = \eps$,
    \item $\mu' = \mu$, and
    \item the size of the new secret key is $\ell + \bigO{Q \cdot \secpar^2/\tau}$.
\end{itemize}
\end{lemma}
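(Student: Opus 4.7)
The plan is to build an \emph{all-at-once} compiler that precomputes a table of the decoder's ``heavy'' oracle queries and embeds it into the secret key. In more detail, $\KeyGen'$ first runs $\sk \draw \KeyGen$; then, for each query position $i \in [Q]$, it identifies the \emph{heavy set} $S_i$ consisting of those queries $q$ such that the decoder, run on a uniformly random input $x \draw \zo^n$ with fresh oracle and coins, makes $q$ as its $i$-th query with probability at least $\tau$. By Markov $|S_i| \leq 1/\tau$, so $S := \bigcup_i S_i$ has $|S| \leq Q/\tau$. The generator samples the bits $\{\cR(q)\}_{q \in S}$ once and outputs $\sk' := (\sk, S, \{\cR(q)\}_{q \in S})$; its length is $\ell + O(Q \secpar/\tau)$, well within the $O(Q\secpar^2/\tau)$ slack allowed in the lemma. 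Both $\Enc'$ and $\Dec'$ simulate $\Enc^\cR$ and $\Dec^\cR$ on input $\sk$: queries $q \in S$ are answered by the stored bit, and all other queries by fresh random bits that are lazily sampled and cached within a single execution but are independent across the encoder, the decoder, and distinct executions.

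Pseudorandomness and soundness reduce cleanly to those of the original scheme: a single execution of $\Enc'(\sk')$ produces a codeword whose marginal distribution equals that of $\Enc^\cR(\sk)$ over uniformly random $(\sk, \cR)$, so any compiled distinguisher with advantage $\eps'$ yields an original distinguisher (which simulates the missing oracle bits on the fly) of the same advantage, giving $\eps' = \eps$; the analogous reduction gives $\mu' = \mu$.

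The main analysis is the completeness loss. I would couple the compiled completeness experiment with an auxiliary random-oracle experiment by extending the stored $\cR|_S$ to a fully random $\cR$ and running both schemes on the same $\sk$, $\cR$, and coins $r_E, r_D$. A short hybrid argument---first replacing the encoder's fresh bits by $\cR$, then the decoder's---shows that the compiled and original success probabilities differ by at most the probability of the following bad event: for some $i, j \in [Q]$, the encoder's $i$-th query equals the decoder's $j$-th query and that common query lies outside $S$. (If both parties read a stored $S$-point, or only one party ever queries the point, then the compiled scheme's fresh randomness has the same marginal distribution as the original's $\cR(q)$, and the transcripts agree.) Minor bookkeeping around accidental coincidences between the compiler's fresh sampling and the $S$-table contributes the $2^{-\secpar} Q/\tau$ term.

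To bound the collision event I would union-bound over the $Q^2$ pairs $(i, j)$ and apply \Cref{lem:main} to each. The subtlety is that \Cref{lem:main} concerns randomized functions on $\zo^n$ with a \emph{uniform} input, whereas in our experiment the decoder receives $\tilde c \draw \cN_\rho(c)$ for the real (non-uniform) codeword $c$. I would first apply \Cref{lem:stats} to the $\eps$-pseudorandomness of $m$ codewords, taking $m \to \infty$ to kill the $\sqrt{\eps}/m$ and $\ell/m$ terms, to deduce that $(\sk, c)$ is $O(\sqrt{\eps n})$-close to $(\sk, U)$ in statistical distance; replacing $c$ by a uniform $x \draw \zo^n$ (and $\tilde c$ by $\tilde x \draw \cN_\rho(x)$) thus costs at most $O(\sqrt{\eps n})$ per pair. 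Taking $f_\sk(\tilde x)$ to be the decoder's $j$-th query on input $\tilde x$ and $g_\sk(x)$ to be the encoder's $i$-th query (as a randomized function of $x$ that simply ignores $x$), the definition of $S_j$ directly guarantees $\Pr_x[f_\sk(x) = q] \leq \tau$ for every $q \notin S_j$, and \Cref{lem:main} yields the claimed $\tau^{(1-\rho^2)/(2(1+\rho^2))}$ factor. The main technical obstacle---the step I expect to require the most care---is that $f_\sk$ and $g_\sk$ share the oracle $\cR$ as internal randomness, violating \Cref{lem:main}'s implicit independence hypothesis; I would resolve this by invoking the lemma pointwise in $\cR$ (so that only the independent coins $r_E, r_D$ remain) and then averaging over $\cR$, which preserves the heavy-set hypothesis precisely because $S_j$ was defined by averaging over $\cR$ in the first place.
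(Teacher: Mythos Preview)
Your overall plan is close to the paper's, but there is a real gap at exactly the point you flag as ``the main technical obstacle.'' You define each heavy set $S_j$ by averaging over the oracle: $q\in S_j$ iff $\Pr_{x,\cR,r_D}[\text{Dec's }j\text{-th query}=q]\ge\tau$. So for $q\notin S_j$ you only know the \emph{averaged} bound $\E_\cR\bigl[\Pr_{x,r_D}[\text{Dec}^\cR\text{'s }j\text{-th query}=q]\bigr]<\tau$. Your proposed fix is to apply \Cref{lem:main} pointwise in $\cR$ and then average; but pointwise the relevant $\alpha$ is $\alpha_\cR:=\max_{q\notin S}\Pr_{x,r_D}[\text{Dec}^\cR\text{'s }j\text{-th query}=q]$, and nothing in your setup controls this---once the first $j{-}1$ oracle answers are fixed, the $j$-th query can collapse onto a single point outside $S_j$, giving $\alpha_\cR=1$. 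Even granting $\E_\cR[\alpha_\cR]\le\tau$, concavity of $t\mapsto t^\beta$ (with $\beta=\tfrac{1}{2}\cdot\tfrac{1-\rho^2}{1+\rho^2}<1$) makes Jensen go the wrong way: $\E_\cR[\alpha_\cR^\beta]\ge(\E_\cR[\alpha_\cR])^\beta$. So ``averaging preserves the heavy-set hypothesis'' is false in the sense you need.

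The paper supplies precisely the missing idea: it defines $\tau$-heavy \emph{relative to the specific oracle $F_2$ the decoder will actually use}, and it discovers these queries by empirical sampling---running $\Dec^{F_0}(\sk,x)$ on $\lceil\lambda/\tau\rceil$ fresh random inputs under a draft oracle $F_0$ and recording every query seen. The $2^{-\lambda}\cdot Q/\tau$ term is the probability this sampling misses some heavy query (not ``accidental coincidences with the $S$-table,'' as you guessed); a short symmetry argument---$F_0$ and $F_2$ are exchangeable conditioned on $S$---transfers the guarantee from $F_0$ to $F_2$. Conditioned on this good event, the lightness bound holds \emph{for every} $F_2$ in the support, the encoder's oracle $F_1$ is independent of $F_2$ given $S$, and \Cref{lem:main} applies directly with no Jensen obstacle. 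A smaller issue: your $g_{\sk}$ must not ``simply ignore $x$.'' The encoder's $i$-th query and the codeword $c$ are correlated through $(F_1,r_E)$, so to transport the $O(\sqrt{\varepsilon n})$ bound from \Cref{lem:stats} you need $g_{\sk'}(c)$ to be the $i$-th encoder query \emph{conditioned on the codeword equaling $c$}; if $g$ ignores its input then the noise channel plays no role and \Cref{lem:main} degenerates (its hypothesis and conclusion coincide, since $\cN_\rho(x)$ is marginally uniform).
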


\begin{figure}[!ht]
\begin{mdframed}
Let $\tau\in(0,1)$ be any number. Let $(\KeyGen,\Enc^\cR,\Dec^\cR)$ be a PRC in the random oracle model where $\Dec^\cR$ makes at most $Q = Q(\secpar)$ queries to $\cR$. We compile it into a new PRC scheme $(\KeyGen',\Enc',\Dec')$ in the standard model as follows.

\begin{itemize}
    \item $\KeyGen'(\secparam)$:
    \begin{enumerate}
        \item Sample $\sk \draw \KeyGen(\secparam)$ and a random function $F_0: \{0,1\}^\secpar \to \{0,1\}$.
        \item Initialize $S = \emptyset$ and repeat the following for $\lceil \lambda/\tau \rceil$ times:
        \begin{itemize}
            \item Sample $x \draw \{0,1\}^n$ and run $\Dec^{F_0}(\sk, x)$, recording each query-response pair in $S$.
        \end{itemize}
        \item Output the new secret key as $\sk' = (\sk, S)$.
    \end{enumerate}

    \item $\Enc'(\sk')$: Parse $\sk' = (\sk, S)$ and sample a random function $F_1 : \{0,1\}^\secpar \to \{0,1\}$ that is consistent with $S$. Run $\Enc^{F_1}(\sk)$ and output the result.
    
    \item $\Dec'(\sk',x)$: Parse $\sk' = (\sk, S)$ and sample a random function $F_2 : \{0,1\}^\secpar \to \{0,1\}$ that is consistent with $S$. Run $\Dec^{F_2}(\sk, x)$ and output the result.
\end{itemize}

\end{mdframed}
\caption{Our compiler}
\label{fig:compiler}
\end{figure}

\begin{proof}[Proof of Lemma~\ref{lem:ro-1}]

Our compiler is presented in Figure~\ref{fig:compiler}. We now proceed to prove its properties as stated in Lemma~\ref{lem:ro-1}.

\paragraph{Secret key size.} The size of the secret key grows by the size of $S$, which is $\bigO{Q \cdot \secpar^2/\tau}$ by construction.

\paragraph{Pseudorandomness \& soundness.} Observe that for the pseudorandomness (resp., soundness) property, the experiment only involves $\KeyGen'$ and the encoder $\Enc'$ (resp., the decoder $\Dec'$).
The distribution of any process involving only the encoder (resp., decoder) is identical in the compiled scheme to the original scheme.
This is because \emph{locally}, the encoder (resp., decoder) perfectly simulates the random oracle.
Consequently, the pseudorandomness and soundness guarantees do not change at all.

\paragraph{Completeness.}
If $\Enc^{F_1}$ and $\Dec^{F_2}$ do not make the same query (except for those contained in $S$), then the compiled scheme perfectly simulates the original scheme.
Therefore, it suffices to bound the probability that $\Enc^{F_1}$ and $\Dec^{F_2}$ both query their oracles on the same point $q$ which is not in $S$.
We refer to such a query as an {\em intersection query}.%
\footnote{Note that queries in $S$ are not counted as intersection queries.}

We will say that a query $q$ is \emph{$\tau$-heavy} for $\sk, F$ if
\[
    \Pr[\Dec^F(\sk, x) \text{ queries } q  \ \vert\ x \draw \{0,1\}^n] \geq \tau.
\]
Consider the PRC completeness experiment for our compiled scheme.

\begin{enumerate}
    \item[] \textbf{Completeness experiment:}
    \item Sample $\sk \draw \KeyGen(\secparam)$ and a random function $F_0: \{0,1\}^\secpar \to \{0,1\}$.
    \item Initialize $S = \emptyset$ and repeat the following for $\lceil\secpar/\tau\rceil$
    times:
    \begin{itemize}
        \item Sample $x \draw \{0,1\}^n$ and run $\Dec^{F_0}(\sk, x)$. Record each query-response pair in $S$.
    \end{itemize}
    \item Let $\sk' = (\sk, S)$. Sample random functions $F_1, F_2 : \{0,1\}^\secpar \to \{0,1\}$ consistent with $S$.
    \item Compute $c \draw \Enc^{F_1}(\sk)$ and $\tilde{c} \draw \cN_{\rho}(c)$.
    \item Compute $\Dec^{F_2}(\sk, \tilde{c})$.
\end{enumerate}
In this experiment we define two events:
\begin{itemize}
    \item $\Bad_1$: This event happens when $S$ does not contain all queries which are $\tau$-heavy for $\sk, F_2$.
    \item $\Bad_2$: This event happens when (1) $S$ contains all queries which are $\tau$-heavy for $\sk, F_2$, but (2) $\Enc^{F_1}(\sk)$ and $\Dec^{F_2}(\sk, \tilde{c})$ still make an intersection query (which is not included in $S$).
\end{itemize}
We conclude the completeness guarantee by proving the following two bounds:
\begin{enumerate}
    \item $\prob{\Bad_1}\leq 2^{-\secpar} \cdot Q / \tau$. This bound will hold for every fixed choice of $\sk$.
    \item $\prob{\Bad_2}\leq Q^2\cdot \left( \tau^{\frac{1}{2} \cdot \left(\frac{1-\rho^2}{1+\rho^2} \right)} + \bigO{\sqrt{\varepsilon n}} \right)$. This bound will only hold on average over $\sk$.
\end{enumerate}

\paragraph{Bounding $\Bad_1$.}
In this part we will consider $\sk$ as fixed.
For any $q \in \{0,1\}^\secpar$ that is $\tau$-heavy for $\sk, F_0$, the probability that $q$ is queried in any single iteration of the loop in step 2 is at least $\tau$.
Therefore, the probability that $q$ is \emph{never} queried throughout the $\lceil \secpar / \tau \rceil$ iterations of the loop in step 2 is at most
\[
    (1-\tau)^{\lceil \secpar / \tau \rceil} \leq 2^{-\secpar}.
\]
The total number of $\tau$-heavy queries for $\sk, F_0$ is upper bounded by $Q/\tau$ because $\Dec^{F_0}(\sk, \cdot)$ makes at most $Q$ queries.
So by a union bound, the probability that $S$ does not contain all the queries which are $\tau$-heavy for $\sk, F_0$ is at most $2^{-\secpar} \cdot Q / \tau$.

Of course, we are interested in the probability that $S$ contains all the queries which are $\tau$-heavy for $\sk, F_2$ (not $\sk, F_0$).
But these two probabilities are the same because, conditioned on $S$, $F_0$ and $F_2$ are equal on every point in $S$ and both uniformly random on every other point.

More formally, note that for a given $\sk$ we could equivalently define $F_0$ as follows:
\begin{enumerate}
    \item Initialize $S = \emptyset$ and repeat the following for $\lceil\secpar/\tau\rceil$
    times:
    \begin{itemize}
        \item Sample $x \draw \{0,1\}^n$ and run $\Dec^{(\cdot)}(\sk, x)$, simulating the oracle on-the-fly. Record each query-response pair in $S$.
    \end{itemize}
    \item Let $\sk' = (\sk, S)$. Sample a random function $F_0 : \{0,1\}^\secpar \to \{0,1\}$ consistent with $S$.
\end{enumerate}
Since the distribution of $S, F_0$ does not change if we sample $F_0$ in this way instead, the probability that $S$ contains all the queries which are $\tau$-heavy for $\sk, F_0$ does not change.
However, under this method of sampling it is clear that $S, F_0$ is distributed identically to $S, F_2$, so
\[
    \Pr[\Bad_1] \le 2^{-\secpar} \cdot Q / \tau.
\]

\paragraph{Bounding $\Bad_2$.} Now we are interested in the probability that both the encoder and decoder ask a query that is not $\tau$-heavy for $\sk, F_2$.
Whereas we bounded $\Pr[\Bad_1]$ for every fixed choice of $\sk$, we will use the randomness of the entire experiment to bound $\Pr[\Bad_2]$.

For $i,j\in[Q]$, let $\Bad_2^{i,j}$ denote the event that (1) $S$ contains all the $\tau$-heavy queries for $\sk, F_2$, but (2) the $i$-th query made by $\Enc^{F_1}(\sk)$ and the $j$-th query made by $\Dec^{F_2}(\sk, \tilde{c})$ are both some $q$ that is not in $S$.
By a union bound,
\[
    \prob{\Bad_2} \le \sum_{i,j\in[Q]}\prob{\Bad_2^{i,j}}.
\]
We will use Lemma~\ref{lem:main} to show that, for every $i,j \in [Q]$,
\[
    \prob{\Bad_2^{i,j}}\leq \tau^{\frac{1}{2} \cdot \left(\frac{1-\rho^2}{1+\rho^2} \right)} + \bigO{\sqrt{\varepsilon n}}.
\]
Let $f_{\sk'}$ be the (randomized) function with $\sk' = (S, \sk)$ hardwired that maps $\tilde{c}$ to the $j$-th query made by $\Dec^{F_2}(\sk, \tilde{c})$.
Let $h_{\sk'}$ be the (randomized) function with $\sk'$ hardwired that maps $c$ to the $i$-th query made by $\Enc^{F_1}(\sk)$, conditioned on $c \draw \Enc^{F_1}(\sk)$; let $g_{\sk'}$ be identical to $h_{\sk'}$ except that $g_{\sk'}(c)$ outputs $\bot$ whenever $h_{\sk'}$ outputs any query that is in $S$.
Observe that
\[
    \prob{\Bad_2^{i,j}} = \prob{\left. \neg \Bad_1 \wedge f_{\sk'}(\tilde{c}) = g_{\sk'}(c) \ \middle\vert\ \begin{gathered}
            c \draw \Enc^{F_1}(\sk) \\ \tilde{c} \draw\calN_{\rho}(c)
        \end{gathered}\right.}.
\]
Applying \Cref{lem:stats}, we have that $\SD{(\sk', c \draw \Enc(\sk')), (\sk', x \draw \{0,1\}^n\})} = \bigO{\sqrt{\varepsilon n}}$, so
\begin{align*}
    & \prob{\left. \neg \Bad_1 \wedge f_{\sk'}(\tilde{c}) = g_{\sk'}(c) \ \middle\vert\ \begin{gathered}
            c \draw \Enc^{F_1}(\sk) \\ \tilde{c} \draw\calN_{\rho}(c)
        \end{gathered}\right.} \\
    &\le \prob{\left. \neg \Bad_1 \wedge f_{\sk'}(\widetilde{x}) = g_{\sk'}(x) \ \middle\vert\ \begin{gathered}
            x \draw \{0,1\}^n \\ \widetilde{x} \draw\calN_{\rho}(x)
        \end{gathered}\right.} + \bigO{\sqrt{\varepsilon n}}.
\end{align*}
Recall that $\neg \Bad_1$ means $S$ contains all the $\tau$-heavy queries for $S, F_2$, and $g_{\sk'}(x)$ never outputs any query that is in $S$.
Therefore, for any $x^* \in \{0,1\}^n$,
$$\prob{f_{\sk'}(x) = g_{\sk'}(x^*) \ \vert\ x\draw\zo^n \wedge \neg \Bad_1}\leq \tau.$$
Since conditioning on $\neg \Bad_1$ does not affect the distribution on $x \from \{0,1\}^n, \tilde{x} \from \calN_{\rho}(x)$, Lemma~\ref{lem:main} implies that
\begin{align*}
    & \prob{\left. \neg \Bad_1 \wedge f_{\sk'}(\tilde{x}) = g_{\sk'}(x) \ \middle\vert\ \begin{gathered}
        x \draw \{0,1\}^n \\ \tilde{x} \draw\calN_{\rho}(x)
    \end{gathered}\right.} \\
    &\le \prob{\left. f_{\sk'}(\tilde{x}) = g_{\sk'}(x) \ \middle\vert\ \begin{gathered}
        x \draw \{0,1\}^n \\ \tilde{x} \draw\calN_{\rho}(x) \\ \neg \Bad_1
    \end{gathered}\right.}  \\
    &\leq \tau^{\frac{1}{2} \cdot \left(\frac{1-\rho^2}{1+\rho^2}\right)},
\end{align*}
completing the proof.
\end{proof}

\subsection{The main theorem}

Let $\cR:\zo^\secpar\to\zo$ be a random oracle which is only given to the algorithms of the pseudorandom code.
That is, we do not assume that pseudorandomness holds against adversaries who are allowed to access $\cR$.
We are now prepared to prove the following theorem.

\begin{theorem}
\label{thm:ro-main}
Let $(\KeyGen^{\cR}, \Enc^{\cR}, \Dec^{\cR})$ be a zero-bit $(\delta,\eps,\mu)$ pseudorandom error-correcting code making queries to a secret random oracle $\cR$.
Suppose that the code is robust to the $\rho$-noise channel, and that the number of queries made by one execution of $\KeyGen^{\cR}$, $\Enc^{\cR}$, and $\Dec^{\cR}$ is at most $Q$.
Then
\[
    \delta + \mu \ge 1 - Q^2 \cdot \secpar^{-\omega(1-\rho)} - \bigO{Q^2 \sqrt{\varepsilon n}}.
\]
\end{theorem}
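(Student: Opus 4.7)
The plan is to combine Lemma~\ref{lem:ro-1} with Lemma~\ref{lem:stats}: first compile the random oracle out of the code, then show that in the standard model statistical pseudorandomness forces $\delta' + \mu$ close to $1$. Applying Lemma~\ref{lem:ro-1} with a parameter $\tau$ to be chosen at the end yields a standard-model $(\delta',\varepsilon,\mu)$-PRC with secret key length $\ell' = \ell + \bigO{Q \secpar^2/\tau}$ and
\[
    \delta' = \delta + 2^{-\secpar}Q/\tau + Q^2\tau^{(1-\rho^2)/(2(1+\rho^2))} + \bigO{Q^2\sqrt{\varepsilon n}}.
\]

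Next I would prove that any standard-model $(\delta',\varepsilon,\mu)$-PRC with key length $\ell'$ satisfies $\delta' + \mu \ge 1 - \bigO{\sqrt{\varepsilon n + \sqrt{\varepsilon}/m + \ell'/m}}$ for every $m \in \N$ at which $\varepsilon$-pseudorandomness holds. The argument is a short chain: pseudorandomness gives that $m$ fresh encodings are jointly $\varepsilon$-close to uniform; Lemma~\ref{lem:stats} upgrades this to the same-order bound on $\SD{(\sk',c),(\sk',U)}$ for a single fresh encoding $c$; the noise channel $\cN_\rho$, being independent of $\sk'$ and mapping uniform strings to uniform strings, preserves this closeness, so $(\sk',\cN_\rho(c))$ is also that close to $(\sk',U)$; and then the completeness probability $\Pr[\Dec'(\sk',\cN_\rho(c))=1] \ge 1-\delta'$ and the averaged-soundness probability $\Pr[\Dec'(\sk',U)=1] \le \mu$ differ by at most this statistical distance.

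Combining the two bounds and choosing $\tau = \secpar^{-c(\secpar)}$ for a slowly growing $c(\secpar) \to \infty$ yields $\tau^{(1-\rho^2)/(2(1+\rho^2))} = \secpar^{-\omega(1-\rho)}$, using the identity $(1-\rho^2)/(2(1+\rho^2)) = \Theta(1-\rho)$ uniformly on $\rho \in [0,1)$. The $2^{-\secpar}Q/\tau$ term becomes negligible and $\ell'$ remains at most quasipolynomial. Taking $m$ sufficiently large relative to $\ell'$ then absorbs the $\sqrt{\ell'/m}$ and $\sqrt{\sqrt{\varepsilon}/m}$ contributions into $\bigO{Q^2\sqrt{\varepsilon n}}$, delivering the claimed bound.

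The main obstacle is the joint consistency of the parameter choices: shrinking $\tau$ for a stronger exponent enlarges the key by a factor of $1/\tau$, and invoking Lemma~\ref{lem:stats} then requires $m$ to grow correspondingly. This is a genuine tension in a purely computational setting, but because we work in the statistical (information-theoretic) regime where $m$ may be taken arbitrarily large, the balance goes through without difficulty. Verifying the absorption of the $\ell'/m$ term and tracking the $\Theta(1-\rho)$ exponent carefully are the only nontrivial bookkeeping steps.
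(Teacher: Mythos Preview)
Your proposal is correct and follows essentially the same route as the paper: compile out the oracle via Lemma~\ref{lem:ro-1}, then use Lemma~\ref{lem:stats} to force $\delta'+\mu$ close to $1$ in the standard model, and finally choose $\tau$ to get the stated exponent. The only minor discrepancies are that the paper first strips the oracle queries out of $\KeyGen$ (a precondition for Lemma~\ref{lem:ro-1} that you silently assume) and takes $\tau=\secpar^{-c}$ for an arbitrary constant $c$ rather than a slowly growing $c(\secpar)$; both choices land on the same $\secpar^{-\omega(1-\rho)}$ bound, and your more explicit bookkeeping of the $\ell'/m$ term is if anything an improvement over the paper's terse invocation of Lemma~\ref{lem:stats}.
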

\begin{proof}
    We first remove any oracle queries made by $\KeyGen^{\cR}$ by sampling the responses at random and including them in the secret key.
    If $\Enc^{\cR}$ or $\Dec^{\cR}$ tries to make any query whose response is included in the secret key, it uses the stored response instead of consulting $\cR$.
    This transformation does not change the parameters of our PRC at all, except to increase the length of the secret key by a polynomial in $\secpar$.

    Next, we apply \Cref{lem:ro-1} using $\tau = \secpar^{-c}$, where $c > 0$ is any constant.
    This yields a statistically-pseudorandom $(\delta',\eps,\mu)$-PRC construction $(\KeyGen', \Enc', \Dec')$ in the standard model with
    \[
        \delta' = \delta + \negl + Q^2\cdot\secpar^{-\frac{c}{2} \cdot \left(\frac{1-\rho^2}{1+\rho^2} \right)} + \bigO{Q^2\cdot\sqrt{\varepsilon n}}
    \]
    and $\poly$-length secret keys.
    Now by \Cref{lem:stats}, PRC codewords are $\bigO{\sqrt{\varepsilon n}}$-indistinguishable from random strings \emph{even to a distinguisher provided with the secret key}.
    Therefore the completeness-soundness gap is $\bigO{\sqrt{\varepsilon n}}$, i.e., $\delta' + \mu \ge 1-\bigO{\sqrt{\varepsilon n}}$.
    Substituting the above formula for $\delta'$ and simplifying yields the result.
\end{proof}

\begin{corollary}
    There do not exist statistically-secure pseudorandom codes for any constant-rate noise channel, relative to any crypto oracle.
\end{corollary}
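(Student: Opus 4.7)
The plan is to reduce the crypto oracle setting to the secret random oracle setting of Theorem~\ref{thm:ro-main} by having the encoder and decoder internally emulate the crypto oracle. Concretely, suppose $(\KeyGen^{\cB^\cR}, \Enc^{\cB^\cR}, \Dec^{\cB^\cR})$ is a statistically-secure zero-bit PRC against the $\rho$-noise channel for some constant $\rho < 1$. I will construct a new PRC $(\KeyGen', \Enc', \Dec')$ in the \emph{secret random oracle} model (where only the key holders have access to the oracle, not the pseudorandomness adversary) with essentially the same completeness, soundness, and pseudorandomness parameters, and then invoke Theorem~\ref{thm:ro-main} for a contradiction.

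The new scheme shares a single secret random oracle $\cR$ among $\KeyGen'$, $\Enc'$, and $\Dec'$. Each of these algorithms runs its counterpart from the original scheme, and whenever that counterpart would invoke $\cB^\cR$ on some input, the compiled algorithm instead simulates $\cB$ locally, forwarding $\cB$'s internal queries to the shared oracle $\cR$. Since $\cB$ is a stateless, deterministic, polynomial-time Turing machine, this simulation is perfect; the per-algorithm query count grows only by a polynomial factor, so Theorem~\ref{thm:ro-main}'s $Q$-dependent bound remains polynomial. Completeness and soundness are thus preserved exactly. For pseudorandomness, the adversary in the original crypto oracle game has oracle access to $\cB^\cR$, while the adversary in the compiled game has no oracle access whatsoever; taking away the adversary's oracle can only decrease its advantage, so the compiled scheme's pseudorandomness error is at most that of the original, which is $\negl$.

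Applying Theorem~\ref{thm:ro-main} to the compiled scheme with constant $\rho$, polynomially many queries $Q$, polynomial codeword length $n$, and negligible pseudorandomness error $\varepsilon$, the right-hand side $1 - Q^2\cdot \secpar^{-\omega(1-\rho)} - \bigO{Q^2\sqrt{\varepsilon n}}$ is $1 - \negl$, so $\delta + \mu \ge 1 - \negl$. This contradicts statistical completeness and soundness, proving the corollary.

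The main subtlety I anticipate is a mismatch in oracle domains: the crypto oracle definition uses $\cR : \{0,1\}^* \to \{0,1\}$, whereas Theorem~\ref{thm:ro-main} is stated for $\cR : \{0,1\}^\secpar \to \{0,1\}$. This is only a cosmetic issue, resolvable by padding all of $\cB$'s queries to a fixed polynomial length $\secpar' = \poly(\secpar)$ before forwarding them to the shared oracle, then invoking Theorem~\ref{thm:ro-main} with security parameter $\secpar'$; the quantitative bounds are unaffected because the theorem's proof depends on the number of queries and not the oracle's domain size. No explicit ``prefixing with $\sk$'' trick is needed here, because in the compiled game the adversary simply has no oracle access at all.
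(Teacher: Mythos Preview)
Your proposal is correct and follows essentially the same approach as the paper: both simulate the crypto oracle $\cB^\cR$ locally using direct access to $\cR$, observe that the compiled scheme is a PRC in the secret random oracle model (with the pseudorandomness adversary only weakened by losing access to $\cB^\cR$), and then invoke Theorem~\ref{thm:ro-main}. Your treatment is in fact slightly more thorough than the paper's---you explicitly address the polynomial blowup in query count and the domain mismatch between $\{0,1\}^*$ and $\{0,1\}^\secpar$, neither of which the paper spells out.
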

\begin{proof}
    Suppose that $(\KeyGen^{\cO}, \Enc^{\cO}, \Dec^{\cO})$ is a pseudorandom code for some $(1-\Omega(1))$-noise channel, relative to some crypto oracle $\cO$.
    Assume it is a zero-bit PRC, by always using the all-zero message and disregarding the decoder's output.

    Define $(\overline{\KeyGen}^{\cR}, \overline{\Enc}^{\cR}, \overline{\Dec}^{\cR})$ to behave identically to $(\KeyGen^{\cO}, \Enc^{\cO}, \Dec^{\cO})$, except that they simulate $\cO$ using oracle access to a random function $\cR$.
    If $\cR$ is secret---that is, the pseudorandomness adversary is not allowed to query $\cR$---then $(\overline{\KeyGen}^{\cR}, \overline{\Enc}^{\cR}, \overline{\Dec}^{\cR})$ is a pseudorandom code.
    We can therefore invoke \Cref{thm:ro-main} with $(\overline{\KeyGen}^{\cR}, \overline{\Enc}^{\cR}, \overline{\Dec}^{\cR})$ to conclude that
    \[
        \delta + \mu \ge 1 - \negl,
    \]
    which means that there is a negligible completeness-soundness gap.
\end{proof}



    

\addcontentsline{toc}{section}{References}
\bibliographystyle{alphaurl}
\bibliography{abbrev0,crypto,local}



\end{document}